\newtheorem{theorem}{Theorem}[section]
\newtheorem{proposition}[theorem]{Proposition}
\theoremstyle{definition}
\newtheorem{definition}[theorem]{Definition}
\theoremstyle{remark}
\theoremstyle{definition}
\numberwithin{equation}{section}
\newcommand{\set}[1]{\left\{#1\right\}}
\newcommand{\R}{\mathbb R}
\newcommand{\Z}{\mathbb Z}
\newcommand{\N}{\mathbb N}
\newcommand{\EE}{\mathcal{E}}
\newcommand{\CC}{\mathcal{C}}
\newcommand{\eps}{\varepsilon}
\begin{document}

\title{A complexity approach to the soliton resolution conjecture}%
\author{Claudio Bonanno} 
\address{Dipartimento di Matematica, Universit\`a di Pisa, Largo Bruno Pontecorvo n. 5, 56127 Pisa, Italy. Tel.: +39-050-2213883}
\email{bonanno@dm.unipi.it}
\thanks{I thank Vieri Benci for many stimulating discussions about complexity of solitons. I am partially supported by ``Gruppo Nazionale per l'Analisi Matematica, la Probabilit\`a e le loro Applicazioni (GNAMPA)'' of Istituto Nazionale di Alta Matematica (INdAM), Italy.}
\maketitle

\begin{abstract}
The soliton resolution conjecture is one of the most interesting open problems in the theory of nonlinear dispersive equations. Roughly speaking it asserts that a solution with generic initial condition converges to a finite number of solitons plus a radiative term. In this paper we use the complexity of a finite object, a notion introduced in Algorithmic Information Theory, to show that the soliton resolution conjecture is equivalent to the analogous of the second law of thermodynamics for the complexity of a solution of a dispersive equation.
\end{abstract}

\section{Introduction} \label{sec:intro}

One of the most interesting phenomena observed for solutions of nonlinear dispersive PDEs is described by the \emph{soliton resolution conjecture}. Solitons are solutions of nonlinear equations which are localized, maintaining the same form for all time, and are stable under small perturbations. The resolution conjecture is about a much more strong property of solitons and is based on extensive numerics. It is believed that in many dispersive equations, a ``generic" solution eventually resolves into a finite number of solitons plus a radiative term (see for example \cite{tao1}). This conjecture is vaguely defined and is quite difficult to be more precise due to the ``genericity" of the result, even when restricting to a specific equation. 

There are rigorous results for the Korteweg-de Vries equation and the 1d cubic nonlinear Schr\"odinger equation (NLS), due to the inverse scattering method, and more recent results on the Wave Equation and the high-dimensional NLS based on outstanding and powerful methods in the study of dispersive equations \cite{dkm1,dkm2,dkm3,tao-att}. A probabilistic approach to the resolution conjecture has been introduced for the mass-subcritical NLS in the papers \cite{chatter,ck}. Starting from a discrete NLS it is proved that in the limit of the discretization step going to zero, any uniformly random initial condition has solution converging to a soliton (see \cite[Section 3]{chatter} for a discussion of the result). 

In this paper we introduce a new approach to the soliton resolution conjecture, using the notion \emph{complexity} of finite objects. Roughly speaking, the complexity of a finite object is the amount of information which is necessary to describe it. A formal definition was introduced independently by Kolmogorov and Chaitin, and is now called \emph{Algorithmic Information Content} or \emph{Kolmogorov complexity} (see \cite{lv} and Section \ref{sec:compl} below). The same rough definition can be given for the \emph{Shannon entropy} of a symbol of a string produced by a source of information \cite{shannon}. In fact the two notions turn out to be strictly related when applied to an ergodic dynamical system with a probability invariant measure. In this case, the Shannon entropy is nothing but the \emph{metric entropy} of the system (see for example \cite{kh}), and it coincides with the linear rate of increase of the Kolmogorov complexity of almost all orbits of the system \cite{brudno}.

However the complexity approach has a great advantage with respect to the metric entropy approach, and it is particularly important for applications. Indeed the complexity of an object does not depend on the existence of a mathematical structure. In applications one can think of measuring the complexity of a time series without any information about a mathematical model of the system producing the series. So we can talk of the complexity of the orbit of a system without referring to the phase space or the invariant measure. All we know is that, a posteriori, if the correct mathematical structure exists, then the two notions coincide. As we remarked this is important in applications, but we have benefitted from this property also to introduce the notion of complexity for extended dynamical systems in some cases in which it is not known whether it is possible to define the metric entropy of the system in the classical way. This was done in the paper \cite{bon-col} for PDEs with not-necessarily compact attractors. We refer to \cite{bonanno} for a review of the application of the Kolmogorov complexity to dynamical systems and PDEs.

The situation is much more cumbersome in developing a dynamical system approach to nonlinear dispersive equations, for example the NLS. The dynamical system is a Hamiltonian infinite dimensional system, so there is no attractor in the phase space (at least for the energy norm) and it is difficult to construct a probability invariant measure. The latter problem has been intensively studied after the pioneering paper \cite{lebo}, in which the authors constructed Gibbs invariant measures for NLS on bounded domains, the major successive breakthrough being the introduction in \cite {bour1} of what is now called the Bourgain method. This approach has been extended to the infinite volume case by approximations with bounded domains, see for example \cite{bour2} and \cite{rider}. 

However, we don't need the existence of an invariant measure to apply the complexity approach to nonlinear dispersive equations on unbounded domains. In this paper we introduce this approach in the case of the NLS, and show the connections with the soliton resolution conjecture. In particular we consider the NLS in one dimension, but we believe that the ideas connecting the complexity of a solution of the NLS and the soliton resolution conjecture work in any dimension, the generalization being only subject to technical issues, and for many other dispersive equations with similar structure. Indeed we explicitly use only the Lagrangian structure of the equation.

Let $\psi(t,x)$ be a solution of the NLS
\[
i \partial_t \psi = - h^{-2} \triangle \psi - |\psi|^{2\sigma}\psi\, , \qquad (t,x)\in \R \times \R
\]
in the functional space $\mathcal{F}$. Then from the point of view of a dynamical system, the solution is a map $U:\R \times {\mathcal{F}} \to {\mathcal{F}}$ with $U(t, \psi(0,x)) = \psi(t,x)$. Under suitable assumptions for the well-posedness of the evolution problem, the NLS has ten integral of motion, due to the ten dimension group of symmetries of the Lagrangian associated to the equation, and one integral of motion, the \emph{charge}, due to the so-called gauge invariance. Particularly important for our aims are the energy and the charge.

Given a solution $\psi(t,x)$ with finite energy and charge, we introduce a complexity of the function $\psi(t,\cdot) \in {\mathcal{F}}$. First of all we have to reduce the function $\psi(t,\cdot)$ to a finite object. This is done in more steps: to discretize the space, so to consider only the values $\psi(t, h \ell)$ for $\ell\in \Z$; to look at the modulus $|\psi(t,h\ell)|$, which is the important quantity in identifying a concentrated solutions as a soliton; to introduce a coarse graining for the possible values of $|\psi(t,h\ell)|$. These steps are enough to reduce the function $\psi(t,\cdot) \in {\mathcal{F}}$ to a finite object for which we can measure the complexity. Then we are ready to show that

\vskip 0.2cm
\noindent \textbf{Main result.} \emph{The soliton resolution conjecture is equivalent to an increase of the complexity of the function $\psi(t,\cdot)$ with time.}
\vskip 0.2cm

The formal statement is Theorem \ref{main-corpo} in Section \ref{sec:macro} and contains many more details. Roughly speaking we have to consider the case of big enough solitons and small enough graining. Under these assumptions, we prove that the maximum of the complexity is achieved by functions $\psi(t,\cdot)$ with only one bounded region of the space in which they are big enough, the concentration region, and functions regular in this region, that is not oscillating too much. In this respect, interpreting the complexity as an analogous of the Boltzmann entropy, if we believe in the existence of a law stating the increase of the complexity of a solution of a nonlinear dispersive equation, then our main result implies that we should expect this solution to have in the limit a profile given by a concentrated bump plus small oscillating waves outside the concentration region. That is the solution converges towards a soliton plus a radiative term.

In Section \ref{sec:setting} we introduce the precise setting in which we work. In particular we consider a fixed discretization parameter $h$ of the space $\R$, so we directly introduce the problem using the discrete NLS in 1d. However we consider a general nonlinear term, and not a power term. In Section \ref{sec:compl} we define the complexity for functions $\psi(t,\cdot)$, and in Section \ref{sec:macro} we prove our main theorem.

\section{The setting} \label{sec:setting}

We consider the 1d discrete nonlinear Schr\"odinger equation on $\Z$ with discretization parameter $h>0$ that is
\begin{equation}\label{1-dnls}
i\, \dot \psi_{\ell}(t) = - h^{-2} (\delta^{2} \vec{\psi})_{\ell} - f(|\psi_{\ell}|^{2}) \psi_{\ell} \, , \qquad \ell \in \Z
\end{equation}
where $t\in \R$, the function $\vec{\psi}(t) := \set{\psi_{\ell}(t)}$ has time-dependent components and
\[
(\delta^{2} \vec{\psi})_{\ell} := \psi_{\ell-1}+\psi_{\ell+1}-2\psi_{\ell}\, .
\]
Moreover we assume that there exists $F:\R^{+}\to \R$ of class $C^{2}$ such that $F'(s) = f(s)$ and
\begin{itemize}
\item[\textbf{(F1)}] $F(0)=0$ and $F(s) = o(s)$ as $s\to 0^{+}$; 
\item[\textbf{(F2)}] there exists $s_{0}>0$ such that $F(s)\le 0$ for $s\in (0,s_{0})$ and $F(s)>0$ for $s> s_{0}$;
\item[\textbf{(F3)}] there exists $s_1<s_0$ such that $F$ is non-increasing on $(0,s_1)$, and $F$ is non-decreasing on $(s_{0},+\infty)$.
\end{itemize}
We remark that assumption (F2) excludes the possibility for equation \eqref{1-dnls} to have small solitons. In particular by (F2) it follows that all solitons reach height $s_0$ (see Section \ref{sec:hylo}). This is useful to have a neat distinction between the soliton part and the radiation part in a solution to \eqref{1-dnls}.

By standard methods one can check that for any $\vec{\psi}(0) \in l^{2}(\Z)$, there exists a unique global solution $\vec{\psi} \in C^{1}(\R, l^{2}(\Z))$ to \eqref{1-dnls} and that the following quantities are integrals of motion: the \emph{energy}
\begin{equation}\label{def-energy}
\EE(\vec{\psi}) := \frac{1}{h^{2}} \, \sum_{\ell\in \Z} |\psi_{\ell}-\psi_{\ell-1}|^{2} - \sum_{\ell\in \Z} F(|\psi_{\ell}|^{2})
\end{equation}
and the \emph{charge}
\begin{equation}\label{def-charge}
\CC(\vec{\psi}) := \sum_{\ell\in \Z} |\psi_{\ell}|^{2}\, .
\end{equation}
One useful step to understand the dynamical properties of solutions of dispersive equations is to write the function $\vec{\psi}$ in polar form, namely
\[
\vec{\psi} = (\vec{u},\vec{\theta})\, , 
\]
where $\vec{u}(t) = \set{u_{\ell}(t)}$ and $\vec{\theta}(t) = \set{\theta_{\ell}(t)}$, with 
\[
\psi_{\ell}(t) = u_{\ell}(t) \, e^{i \theta_{\ell}(t)}\, , \qquad \ell\in \Z\, .
\]
Of particular interest are periodic-in-time solutions for which $\theta_{\ell}(t) = \omega t$ for all $\ell$, for some $\omega \in \R$. These solutions are called \emph{standing waves} and sometimes \emph{discrete breathers}.

Using the polar form, we re-write energy as
\[
\EE(\vec{u},\vec{\theta}) = J(\vec{u}) + K(\vec{u},\vec{\theta}) 
\]
with
\begin{equation}\label{def-energy-int}
J(\vec{u}) := \frac{1}{h^{2}} \, \sum_{\ell\in \Z}  |u_{\ell}-u_{\ell-1}|^{2} - \sum_{\ell\in \Z} F(u_{\ell}^{2})
\end{equation}
\begin{equation}\label{def-energy-cin}
K(\vec{u},\vec{\theta}) := h^{-2} \sum_{\ell\in \Z} \, u_{\ell}^{2}\, |\theta_{\ell}-\theta_{\ell-1}|^{2}
\end{equation}
where we have used the approximation $\sin(\theta_{\ell}-\theta_{\ell-1})\sim (\theta_{\ell}-\theta_{\ell-1})$ and $\cos(\theta_{\ell}-\theta_{\ell-1})\sim 1$, which is necessary for the energy to stay finite as $h\to 0^{+}$. Notice that this approximation is an identity for standing waves. The term $J(\vec{u})$ is called the \emph{internal energy} of $\vec{\psi}$ and only depends on the ``shape'' of the functions, and $K(\vec{u},\vec{\theta})$ is called the \emph{kinetic energy}. For a discussion on these quantities we refer to \cite{bgm1}, where this interpretation of the different terms of the energy turns out to be fundamental for the variational approach to the dynamics of soliton solutions.

In the same way, we re-write charge as
\begin{equation}\label{def-charge-pf}
\CC(\vec{u}) := \sum_{\ell\in \Z} u_{\ell}^{2}\, .
\end{equation}

Here we restrict our attention to the study of the internal energy $J$ and the ``shape'' $\vec{u}$ of a function. This is an interesting problem, for example dynamically stable solutions to \eqref{1-dnls} are found as minimizers of $J$ on the manifold of functions $\vec{u}$ with fixed charge (see \cite{weinstein} and references therein). 

For a fixed $\sigma>0$, let
\begin{equation}\label{minimo}
m_{\sigma}:= \inf \set{J(\vec{u})\, :\, \CC(\vec{u}) = \sigma^{2}}\, ,
\end{equation}
and for $m\ge m_{\sigma}$ let
\begin{equation}\label{sottinsieme}
S(m,\sigma):= \set{\vec{u}\in l^{2}(\Z)\, :\, J(\vec{u}) \le m\, , \ \CC(\vec{u}) = \sigma^{2}}\, .
\end{equation}
If $\vec{\psi}(t)$ is a solution to \eqref{1-dnls} with $\vec{\psi}(0)= \vec{\psi}_{0}$, $\EE(\vec{\psi}_{0}) = m$ and $\CC(\vec{\psi}_{0}) = \sigma^{2}$, then by conservation of energy and charge
\[
J(\vec{u}(t)) = \EE(\vec{\psi}(t)) - K(\vec{u}(t),\vec{\theta}(t)) \le \EE(\vec{\psi}(t)) = \EE(\vec{\psi}_{0}) = m
\]
for all $t\in \R$, hence $\vec{u}(t) \in S(m,\sigma)$ for all $t$.

\subsection{Hylomorphic functions} \label{sec:hylo}
We have just recalled that the study of the existence of the minimum for the functional $J$ on the manifold of functions $\vec{u}$ with fixed charge $\CC$ is the procedure to show the existence of \emph{solitons}, namely solitary waves which are orbitally stable. This approach is common to many so-called \emph{focusing dispersive equations} (see \cite{tao-libro} for an introduction) and can be dated back to the paper \cite{coleman} by Coleman et al. More recently, it has been studied in \cite{hylo} for the nonlinear Klein-Gordon equation, considering in particular the role played by the nonlinear term in the properties of the solitons. In the same paper it has been introduced the term \emph{hylomorphic} for this kind of solitons, putting together the greek words ``hyle'' and ``morphe'' which mean respectively ``matter'' and ``form''. 

Using the notation above, we can summarize the ideas underlying the results in \cite{hylo} by saying that solitons with fixed charge $\sigma^{2}$ exist if $m_{\sigma}$ in \eqref{minimo} is smaller than the value of the ``vanishing functions'', namely $\vec{u}^{\eps}$ with $|u^{\eps}_{\ell}|\le \eps$ for all $\ell$, where $\eps\ll 1$. In the particular case of equation \eqref{1-dnls}, considering the functions $\vec{u}^{\eps}$ given by
\begin{equation}\label{frittatine}
u_{\ell}^{\eps} = \left\{
\begin{array}{ll}
\eps\, , & |\ell| \le n\\[0.2cm]
0\, , & |\ell|> n
\end{array}
\right.
\end{equation}
with $\eps= \frac{\sigma}{\sqrt{2n+1}}$, we find
\[
\CC( \vec{u}^{\eps}) = \sum_{\ell=-n}^{n}\, \eps^{2} = (2n+1) \eps^{2} = \sigma^{2}
\]
\[
J( \vec{u}^{\eps}) = \frac{2}{h^{2}} \eps^{2} - \sum_{\ell=-n}^{n}\, F(\eps^{2}) = \frac{2}{h^{2}} \eps^{2} - \sigma^{2}\, \frac{F(\eps^{2})}{\eps^{2}} \to 0^{+} \quad \text{as $\eps \to 0^{+}$}
\] 
by assumption (F1). Hence solitons with fixed charge $\sigma^{2}$ exist if $m_{\sigma}<0$ (cfr. \cite{weinstein}). The existence of such $\sigma$ can be proved by using the functions $\vec{u}^{s}$ given by
\[
u_{\ell}^{\eps} = \left\{
\begin{array}{ll}
\sqrt{s}\, , & |\ell| \le n\\[0.2cm]
0\, , & |\ell|> n
\end{array}
\right.
\]
with $s>s_{0}$ and $\eps= \frac{\sigma}{\sqrt{2n+1}}$, for which we find
\[
\CC( \vec{u}^{s}) = \sum_{\ell=-n}^{n}\, s = (2n+1) s
\]
\[
J( \vec{u}^{s}) = \frac{2}{h^{2}} s - \sum_{\ell=-n}^{n}\, F(s) = \frac{2}{h^{2}} s - (2n+1) F(s)\, .
\] 
Indeed for $n$ big enough we find $J( \vec{u}^{s})<0$ since $F(s)>0$ by (F2). So for large enough charges $\sigma^{2}$ we find $m_{\sigma}<0$. 

We now distinguish the indices $\ell\in \Z$ according to the value $u_{\ell}$, in particular if $J(\vec{u})<0$ then $u_{\ell}>s_{0}$ for some $\ell$ according to (F2). We introduce the notation
\begin{equation}\label{def-region}
U^{-} := \set{ \ell \, :\, u_{\ell}^{2}>s_{0}} \quad \text{and} \quad U^{+} := \set{ \ell \, :\, u_{\ell}^{2}\le s_{0}}\, ,
\end{equation}
which is justified by the fact that $J(\vec{u})<0$ implies $U^{-}\not= \emptyset$.

Following the previous argument we say
\begin{definition}\label{def-hylom-f}
A function $\vec{u}\in l^{2}(\Z)$ is called \emph{hylomorphic} if $U^{-}$ is not empty.
\end{definition}

In the following we consider a fixed value $\sigma^{2}$ for which $m_{\sigma}<0$. For what we have discussed above, given $\vec{\psi}_{0}$ with $\EE(\vec{\psi}_{0})< 0$ and $\CC(\vec{\psi}_{0})=\sigma^{2}$, the solution $\vec{\psi}(t)=(\vec{u}(t),\vec{\theta}(t))$ of \eqref{1-dnls} with initial condition $\vec{\psi}_{0}$ satisfies $\vec{u}(t) \in S(0,\sigma)$ for all $t\in \R$, so that $\vec{u}(t)$ is hylomorphic for all $t\in \R$. 

\section{The complexity approach}\label{sec:compl}
Let $\omega$ be a finite string with characters from a finite alphabet ${\mathcal{A}}$, we use the notation $\omega \in {\mathcal{A}}^{*}$. By \emph{complexity} of a finite object we mean the measure of its \emph{information content}, loosely speaking the minimum amount of bits needed to completely describe the object on a personal computer. So that we consider the complexity as a function
\[
K : {\mathcal{A}}^{*} \to \N
\]
This vague definition can be made rigorous using the concept of \emph{universal Turing machine}, which formalizes the idea of a personal computer and of a programming language used by the computer. In this way we obtain the definition of the \emph{Algorithmic Information Content (AIC)} or \emph{Kolmogorov complexity} (see \cite{lv} for definition, properties and applications of the AIC). In this paper we consider a complexity function $K$ with the following properties
\begin{itemize}
\item[\textbf{(K1)}] there exists a constant $c>0$ such that for all $\omega \in {\mathcal{A}}^{*}$, it holds
\[
K(\omega) \le |\omega| \, \log_{2}(\#({\mathcal{A}})) + c
\]
where $|\omega|$ is the length of $\omega$, and $\#({\mathcal{A}})$ is the cardinality of ${\mathcal{A}}$. In particular using the standard coding of natural numbers by binary words, it holds that for all $m\in \N$
\[
K(m) \le \log_2 (m+1) + c\, ;
\]
\item[\textbf{(K2)}] let $\omega'$ be a sub-string of $\omega$ and $b\in \N^*$ be the list of the positions of the symbols of $\omega$ dropped in $\omega'$, then
\[
K(\omega') \le K(\omega) + \sum_{i=1}^{|b|}\, (\log_2(b_i+1) + c)\, .
\]
In the same way there exists a constant $c'>0$, independent on $\omega$, such that if $\omega''$ is the complement of $\omega'$ in $\omega$, then
\[
K(\omega) \le K(\omega') + K(\omega'') + \sum_{i=1}^{|b|}\, (\log_2(b_i+1) + c) +c'\, ;
\]
\item[\textbf{(K3)}] for each $\omega\in {\mathcal{A}}^{k}$ and each $\omega'\in {\mathcal{A}}^{n}$ it holds
\[
K(\omega\, \omega') \ge K(\omega) + 1
\]
where $\omega\, \omega'$ is the unmarked concatenation of $\omega$ and $\omega'$.
\end{itemize}

We now want to measure the complexity of the functions $\vec{u} \in l^2(\Z)$ with $\CC(\vec{u})=\sigma^{2}$. The problem is that $\vec{u}$ is an infinite string with real components. So the first step is to use a coarse graining description of $\vec{u}$.

Let ${\mathcal{A}}=\set{1,\dots,N}$ and ${\mathcal{P}}=\set{P_{1}, \dots, P_{N}}$ be the partition of the interval $[0,\sigma]$ given by 
\[
P_{k}= \Big[ \frac \sigma N\, (k-1),\, \frac \sigma N\, k \Big)\, , \quad k=1,\dots, N-1\, , \quad P_{N} = \Big[ \frac \sigma N\, (N-1),\, \sigma \Big]\, .
\]
If $\CC(\vec{u})=\sigma^{2}$ then $u_{\ell}\in [0,\sigma]$ for all $\ell$, hence $|u_{\ell}-u_{\ell-1}| \in [0,\sigma]$ for all $\ell$. Moreover there are at most $N^2$ components of $\vec{u}$ bigger than $\frac \sigma N$, indeed
\begin{equation} \label{stima-s}
\# \set{\ell\, :\, u_\ell \ge \frac \sigma N} \, \frac{\sigma^2}{N^2} \le \sum_{\ell \in \Z} u_\ell^2 = \sigma^2\, .
\end{equation}
So we can define a function
\begin{equation}\label{coding}
l^2(\Z) \ni \vec{u} \to \omega(\vec{u}) \in \{{\mathcal{A}},+1,-1\}^{\Z}
\end{equation}
by writing
\[
\omega(\vec{u}) = (\dots s_{-n} \omega_{-n}\, \dots s_{-1} \omega_{-1}\, s_{0} \omega_{0}\, s_{1}\omega_{1}\, s_{2}\omega_{2}\, \dots s_{n-1}\omega_{n-1}\dots )
\]
with $\omega_{i}\in {\mathcal{A}}$ for all $i\in \Z$, given by
\begin{equation}\label{def-omega}
\omega_{i}= k \quad \text{if and only if} \quad |u_{i}-u_{i-1}| \in P_{k}\, ,
\end{equation}
and $s_{i}\in \{\lambda, +1,-1\}$ for all $i\in \Z$, where $\lambda$ denotes the empty string, given by
\begin{equation}\label{def-s}
s_{i} = \left\{ 
\begin{array}{ll} 
+1\, , & \quad \text{if $u_{i}-u_{i-1}>0$ and $\omega_{i}\ge 2$}\\[0.2cm]
-1\, , & \quad \text{if $u_{i}-u_{i-1}<0$ and $\omega_{i}\ge 2$}\\[0.2cm] 
\lambda\, , & \quad \text{if $\omega_{i}= 1$}
\end{array} 
\right. 
\end{equation}
The map \eqref{coding} is a coding of the information contained in the function $\vec{u}$, with a coarsening given by the approximation of the differences $|u_{\ell}-u_{\ell-1}|$ with the integers $\omega_{\ell}$. Hence we approximate the information content of $\vec{u}$ with that of $\omega(\vec{u})$, which is a discrete infinite object. Moreover, by the inequality
\[
\frac \sigma N (\omega_{\ell}-1) \le | u_\ell - u_{\ell-1} | \le \frac \sigma N \omega_{\ell}
\]
it follows that definitively $\omega_i=1$ in both directions, otherwise $J(\vec{u})$ is not finite, and by \eqref{stima-s} definitively $s_i=\lambda$. 

So for each $\vec{u}$ there exists $k=k(N,\vec{u})\in \N$ such that we can restrict our attention to $k$ symbols of  $\omega(\vec{u})$, so that we can consider
\begin{equation}\label{esiste-k}
\omega(\vec{u}) \in \{{\mathcal{A}},+1,-1\}^{k(N,\vec{u})}\, .
\end{equation}
We are then reduced to the study of the complexity of a finite object. 

This is a standard approach to the definition of complexity for orbits of a dynamical system (see \cite{brudno}), the next step being the increasing in the number of sets in the partition $\mathcal{P}$ by letting $N\to \infty$ and studying the asymptotic behavior of the complexity. This procedure is obviously suggested by the definition of metric entropy in dynamical systems (see for example \cite{kh}). Then let us argument on the behavior of the coding $\omega(\vec{u})$ as $N$ increases. As $N$ increases, more and more $\omega_{i}$ for which $u_{i}\not= u_{i-1}$ become greater than 1, and as $N$ diverges they increase linearly with $N$. On the other had, all $s_{i}$ remain constant as soon as $\omega_{i}\ge 2$. Hence using the coding \eqref{coding}, we are able to distinguish two different parts in the information content of the approximated $\vec{u}$: one part, the $\omega_{i}$'s, which is dependent on the coarsening, and one part, the $s_{i}$'s which only depends on $\vec{u}$, and can be considered the description of the ``structure'' of $\vec{u}$. 

Hence for $\vec{u}\in l^2(\Z)$ and $N$ fixed, we use the complexity function $K$ with properties (K1)-(K3) as above, to define
\begin{equation}\label{def-i}
{\mathcal{I}}_{N}(\vec{u}):= K(s_{\vec{u}}) \in \N
\end{equation}
where $s_{\vec{u}} = (s_1,\dots,s_k)$, and the $s_{i}$'s are given by \eqref{def-s}. By the translation invariance of our problem we can always assume that the non-empty symbols $s_i$ have indices in the set $\{1,\dots,k\}$ where $k=k(N,\vec{u})$ as defined above.

\subsection{The macrostates} \label{sec:macro}
In our approach to the soliton resolution conjecture, we now identify the subsets of the functions in $l^2(\Z)$ which play the role of solitons.

\begin{definition}\label{the-bumps}
Let $\vec{u}$ be a hylomorphic function in $S(m,\sigma)$ for $m\in (m_\sigma,0)$. We say that $\vec{u}$ \emph{has a single bump at height $\alpha$}, and write $\vec{u} \in {\mathcal{M}}_{b}^\alpha$, if there is exactly one connected component $U_\alpha^-$ of the set $U^{-}$ on which the maximum of the $u_\ell$ is greater or equal than $\alpha$. We denote by ${\mathcal{M}}_{mb}^\alpha$ the complementary set ${\mathcal{M}}_{mb}^\alpha := S(m,\sigma) \setminus {\mathcal{M}}_{b}^\alpha$.

We say that a vector $\vec{u}$ has a \emph{single regular bump at height $\alpha$ with precision $\beta$}, and write $\vec{u} \in {\mathcal{M}}_{b,r}^{\alpha,\beta}$, if $\vec{u}$ is in ${\mathcal{M}}_{b}^\alpha$ and if there exists $\ell_{0}\in U^{-}_\alpha$ such that $u_{\ell}\le u_{\ell'} +\beta$ for all $\ell,\ell' \in U^{-}_\alpha$ with $\ell<\ell'\le \ell_0$, and $u_{\ell}\ge u_{\ell'} - \beta$ for all $\ell,\ell' \in U^{-}_\alpha$ with $\ell_0 \le \ell< \ell'$. We denote by ${\mathcal{M}}_{b,s}^{\alpha}$ the complementary set ${\mathcal{M}}_{b,s}^{\alpha} := {\mathcal{M}}_{b}^\alpha \setminus {\mathcal{M}}_{b,r}^{\alpha,\beta}$.
\end{definition}

The condition in the definition of a regular bump means that the components $U^{-}$ of a vector in $\vec{u} \in {\mathcal{M}}_{b,r}^{\alpha,\beta}$ are first non-decreasing and then non-increasing up to the precision $\beta$.

We now study the behavior of the information content ${\mathcal{I}}_{N}$ defined in \eqref{def-i} on the different macrostates. We first analyze the contribution of the different components of a vector in $S(m,\sigma)$ to ${\mathcal{I}}_{N}$. Using \eqref{def-region} for a vector $\vec{u} \in S(m,\sigma)$, as in \eqref{stima-s} we get 
\begin{equation}\label{card-meno}
\CC(\vec{u})=\sigma^{2}\quad \text{implies} \quad \# (U^{-})\le  \frac{\sigma^{2}}{s_{0}}\, , 
\end{equation}
where $\#(U^{-})$ is the cardinality of the set. Notice that this estimate is independent on $h$ and $N$. Let us denote by $s^{+}_{\vec{u}}$ and $s^{-}_{\vec{u}}$ the sub-strings of $s_{\vec{u}}=(s_{1}, \dots, s_{k})$ defined in \eqref{def-s}, as
\[
s^{+}_{\vec{u}}:= (s_{\ell})_{\ell\in U^{+}}\quad \text{and} \quad  s^{-}_{\vec{u}}:= (s_{\ell})_{\ell\in U^{-}}\, ,
\]
then by (K1) and (K2)
\[
\begin{array}{c}
K(s^+_{\vec{u}}) \le K(s_{\vec{u}}) + \sum_{\ell\in U^-}\, (\log_2(\ell +1)+c) \le \\[0.2cm]
\le K(s_{\vec{u}}) + \#(U^-) \, (\log_2(k +1)+c) \le \\[0.2cm] 
\le K(s_{\vec{u}}) + \frac{\sigma^{2}}{s_{0}} (\log_2(k +1)+c) +c \, .
\end{array}
\]
where we have used \eqref{card-meno} in the last inequality. Moreover, applying again (K1), (K2) and \eqref{card-meno}
\[
\begin{array}{c}
K(s_{\vec{u}})\le K(s^+_{\vec{u}}) + K(s^-_{\vec{u}}) + \sum_{\ell\in U^-}\, (\log_2(\ell +1)+c) + c' \le \\[0.2cm]
\le K(s^+_{\vec{u}}) + \#(U^-) + c+ \#(U^-) \, (\log_2(k +1)+c) + c' \le \\[0.2cm]
\le K(s^+_{\vec{u}}) + \frac{\sigma^{2}}{s_{0}} (\log_2(k +1)+c+1) +c +c'
\end{array}
\]
So that the difference between the complexity of $s^{+}_{\vec{u}}$ and that of the whole string $s_{\vec{u}}$ is logarithmic in $k$. Then, if $K(s_{\vec{u}})$ is of order $k$ with $k$ large enough, we have $K(s_{\vec{u}}) \approx K(s_{\vec{u}}^+)$. We also introduce the notation
\begin{equation}\label{def-i-pm}
{\mathcal{I}}_{N}^\pm(\vec{u}):= K(s_{\vec{u}}^\pm)\, .
\end{equation}

A second step to study the information content of vectors in the different macrostates is to consider the effect of the energy on ${\mathcal{I}}_{N}$. Let us write $J(\vec{u})$ in \eqref{def-energy} as a sum
\[
J(\vec{u}) = J^{+}(\vec{u}) + J^{-}(\vec{u})
\]
with
\begin{equation}\label{new-j}
J^{\pm}(\vec{u}):= \frac{1}{h^{2}} \, \sum_{\ell\in U^{\pm}}  |u_{\ell}-u_{\ell-1}|^{2}- \sum_{\ell\in U^{\pm}} F(u_{\ell}^{2})
\end{equation}

\begin{proposition}\label{monotonia-k}
Let $\sigma$ and $h$ fixed, and let $\mu>0$ and $N$ such that there exist integers $n,a\ge 1$ satisfying
\begin{equation}\label{cond-an}
\frac{\sigma^2}{N^2} \, a^2 \le s_1\qquad \text{and} \qquad (n+1) \Big( \frac{\sigma^2}{h^2 N^2} \, a^2 + |F(\frac{\sigma^2}{N^2} \, a^2)| \Big) \le \mu\, .
\end{equation}
Then for all $\alpha>s_0$, for each $\vec{v}\in {\mathcal{M}}_{b}^\alpha$ there exists $\vec{u}\in {\mathcal{M}}_{b}^\alpha$ with $J^+(\vec{u})\le J^+(\vec{v})+\mu$, $\CC(\vec{u})\le \CC(\vec{v}) + h^2 \mu$ and $K(s^+_{\vec{u}}) \ge K(s^+_{\vec{v}}) + 1$.
\end{proposition}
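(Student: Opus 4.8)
The plan is to build $\vec u$ from $\vec v$ by a single localized perturbation in the far tail of the region $U^+$, arranged so that the sign string $s^+_{\vec u}$ is exactly $s^+_{\vec v}$ with one nonempty block appended at its right end. Property (K3) then delivers the complexity increase for free, since $s^+_{\vec u}=s^+_{\vec v}\,w$ with $w$ nonempty gives $K(s^+_{\vec u})\ge K(s^+_{\vec v})+1$. Concretely, because $\vec v\in l^2(\Z)$ only finitely many of its signs $s_\ell$ are non-empty and $u_\ell\to 0$, so I can fix an index $R$ lying to the right of every non-empty sign of $\vec v$, zero out a short buffer there (a $\lambda$-only region whose charge and energy cost is negligible by (F1) once $R$ is taken large), and then plant a plateau of height $\tfrac{\sigma}{N}a$ on a block of length between $1$ and $n$ surrounded by zeros.

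With this choice the two jumps at the edges of the plateau have magnitude exactly $\tfrac{\sigma}{N}a\ge\tfrac{\sigma}{N}$ (here $a\ge 1$ is used), hence fall in a cell $P_k$ with $k\ge 2$ and carry genuine signs $+1$ and $-1$, while the flat interior gives $\omega_\ell=1$ and empty signs. Thus the perturbation contributes a nonempty block $w=(+1,-1)$ placed after all non-empty signs of $\vec v$, giving $s^+_{\vec u}=s^+_{\vec v}\,w$ as required. The first inequality in \eqref{cond-an}, namely $\tfrac{\sigma^2}{N^2}a^2\le s_1<s_0$, guarantees that the plateau height squared stays below $s_0$, so all new sites lie in $U^+$; in particular $U^-$ and its single high component $U^-_\alpha$ are untouched, and since $\alpha>s_0$ exceeds the plateau height, $\vec u$ is still hylomorphic with a single bump at height $\alpha$, i.e. $\vec u\in\mathcal{M}_b^\alpha$.

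It remains to check the two quantitative bounds, and here only indices of $U^+$ are modified, so $J^-$ is unchanged and I need only estimate the increment of $J^+$. The plateau adds two gradient terms of size $\le\tfrac{1}{h^2}\tfrac{\sigma^2}{N^2}a^2$ and at most $n$ potential terms $-F(\tfrac{\sigma^2}{N^2}a^2)=|F(\tfrac{\sigma^2}{N^2}a^2)|$ (nonnegative, since $F\le 0$ on $(0,s_0)$ by (F2)), whence
\[
J^+(\vec u)-J^+(\vec v)\le \frac{2}{h^2}\,\frac{\sigma^2}{N^2}\,a^2+n\,\Big|F\big(\tfrac{\sigma^2}{N^2}a^2\big)\Big|\le (n+1)\Big(\frac{\sigma^2}{h^2N^2}a^2+\Big|F\big(\tfrac{\sigma^2}{N^2}a^2\big)\Big|\Big)\le \mu,
\]
the first estimate using $n\ge 1$ and the last being the second condition in \eqref{cond-an}. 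Likewise the plateau adds at most $n$ sites of mass $\tfrac{\sigma^2}{N^2}a^2$, so $\CC(\vec u)-\CC(\vec v)\le n\tfrac{\sigma^2}{N^2}a^2\le h^2\mu$, again from \eqref{cond-an}.

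The step I expect to be most delicate is not the energy arithmetic but the clean selection of the insertion site and the verification that the new block appends to the sign string without perturbing any existing sign: one must exploit the $l^2$-decay of $\vec v$ to place the plateau strictly beyond the last non-empty sign, zero a buffer so that the edge jumps are exactly $\tfrac{\sigma}{N}a$ (landing in cells with $k\ge 2$) while the interior jumps stay in $P_1$. A secondary subtlety is the bookkeeping around $\mathcal{M}_b^\alpha$: the perturbation raises the charge and internal energy slightly, so $\vec u$ carries the single-bump \emph{structure} of $\mathcal{M}_b^\alpha$ but with $J^+$ and $\CC$ increased by precisely the tracked amounts. Confining the perturbation to $U^+$ is exactly what keeps these increments controllable and leaves the defining bump $U^-_\alpha$ intact.
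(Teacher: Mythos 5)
Your proposal is correct and follows essentially the same strategy as the paper's proof: modify $\vec{v}$ only beyond the last non-empty sign (so inside $U^{+}$, where the first condition in \eqref{cond-an} keeps the new heights below $s_1<s_0$ and the bump $U^{-}_{\alpha}$ untouched), append non-empty symbols to the right end of $s^{+}_{\vec{v}}$ so that (K3) yields $K(s^{+}_{\vec{u}})\ge K(s^{+}_{\vec{v}})+1$, and charge the energy and charge increments to the second condition in \eqref{cond-an}. The only (cosmetic) difference is the inserted shape: the paper uses a walk $\tilde s\in\{-1,+1\}^{n}$ with steps of size exactly $\sigma/N$ and partial sums in $[0,a]$, so every step carries a sign and (F3) is needed to bound the varying terms $|F(u_\ell^2)|$ by $|F(\frac{\sigma^2}{N^2}a^2)|$, whereas your constant plateau of height $\frac{\sigma}{N}a$ appends just the block $(+1,-1)$ and makes the $F$-terms exact, which is equally sufficient for the claimed $+1$.
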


\begin{proof}
Given $\vec{v}\in l^2(\Z)$ with $\CC(\vec{v})=\sigma^2$, the coding \eqref{coding} introduces the string $\omega(\vec{v})$ for which there exists $k_1=k_1(N,\vec{v})$, as introduced in \eqref{esiste-k}, such that $(s_{\vec{v}})_\ell \not= \lambda$ only if $\ell\in \set{1,\dots,k_1}$. In particular, for $\ell\not\in \set{1,\dots,k_1}$ we assume
\[
v_\ell < \frac \sigma N\, .
\]
We now construct $\vec{u}\in l^2(\Z)$ with the requested properties. First let
\[
u_\ell = v_\ell \qquad \forall\, \ell \le k_1\, .
\]
For $\ell > k_1$, let $n,a\in \N$ be two integers satisfying \eqref{cond-an} and consider $\tilde s\in \set{-1,+1}^n$ such that
\begin{equation}\label{cond-s}
0\le \sum_{i=1}^j\, \tilde s_i \le a\qquad \forall\, j=1,\dots,n\, .
\end{equation}
We can then define
\[
u_{k_1+j} = \frac \sigma N \, \sum_{i=1}^j\, \tilde s_i\qquad \forall\, j=1,\dots,n\, .
\]
and $u_\ell =0$ for $\ell> k_1+n$.

It follows from \eqref{cond-an} and \eqref{cond-s} that 
\[
u_{k_1+j}^2 \le \frac{\sigma^2}{N^2} \, a^2\le s_1<s_0\qquad \forall\, j=1,\dots,n\, ,
\]
and, using also (F3),
\[
\frac{1}{h^{2}} \, \sum_{\ell>k_1}  |u_{\ell}-u_{\ell-1}|^{2} - \sum_{\ell>k_1} F(u_{\ell}^{2}) \le 
\]
\[
\le (n-1) \frac{\sigma^2}{h^2 N^2} + \frac{\sigma^2}{h^2 N^2} a^2 + \frac{\sigma^2}{h^2 N^2} + n |F(\frac{\sigma^2}{N^2} \, a^2)| \le \mu\, .
\] 
In particular we find that $\vec{u}\in {\mathcal{M}}_{b}^\alpha$ and $\set{\ell \ge k_1} \subset U^+$, moreover
\[
J^+(\vec{u}) \le J^+(\vec{v}) +\mu
\]
and
\[
\CC(\vec{u}) \le \CC(\vec{v}) + n\, \frac{\sigma^2}{N^2} \, a^2 \le \CC(\vec{v}) + h^2 \mu\, .
\]
Finally notice that $\tilde s$ is the coding of $u$ for indices $\ell >k_1$, hence by (K3) we have $K(s^+_{\vec{u}}) \ge K(s^+_{\vec{v}}) +1$.
\qed
\end{proof}

We now show that it is possible to decrease the energy $J$ of a function $\vec{v}\in S(m,\sigma)$ by transforming it from a function with multiple bumps into a function with a single bump. This is a well known fact for the Schr\"odinger equation on $\R^n$, here we obtain a quantitative control on the decrease of the energy.

\begin{proposition}\label{prop-mtob} 
Let $\sigma$ and $h$ fixed, and let $\alpha>s_0$. Then for each $\vec{v} \in {\mathcal{M}}_{mb}^\alpha$  there exists $\vec{u} \in {\mathcal{M}}_b^\alpha$ such that $J(\vec{u}) \le J(\vec{v})-c(\vec{v})$ and $\CC(\vec{u})= \CC(\vec{v})$, where $c(\vec{v})$ is an explicit term depending on $\vec{v}$.
\end{proposition}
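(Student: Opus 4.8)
The plan is to realize $\vec u$ as a \emph{rearrangement of the values} of $\vec v$, so that the charge and the nonlinear potential are preserved exactly while the gradient part of $J$ strictly decreases. Since $u_\ell=|\psi_\ell|\ge 0$ I may take $\vec v\ge 0$, and I expand the gradient term as
\[
\frac{1}{h^2}\sum_{\ell\in\Z} |v_\ell-v_{\ell-1}|^2 = \frac{2}{h^2}\,\CC(\vec v) - \frac{2}{h^2}\sum_{\ell\in\Z} v_\ell v_{\ell-1}\, ,
\]
so that
\[
J(\vec v) = \frac{2}{h^2}\,\CC(\vec v) - \frac{2}{h^2}\sum_{\ell\in\Z} v_\ell v_{\ell-1} - \sum_{\ell\in\Z} F(v_\ell^2)\, .
\]
If $\vec u$ is a permutation of the entries of $\vec v$, then $\CC(\vec u)=\CC(\vec v)$ and $\sum_\ell F(u_\ell^2)=\sum_\ell F(v_\ell^2)$, and the only term that changes is the nearest-neighbour interaction $\sum_\ell u_\ell u_{\ell-1}$. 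Hence it suffices to produce a single-bump permutation increasing this interaction, because then
\[
J(\vec v) - J(\vec u) = \frac{2}{h^2}\Big( \sum_{\ell\in\Z} u_\ell u_{\ell-1} - \sum_{\ell\in\Z} v_\ell v_{\ell-1}\Big) =: c(\vec v)\, .
\]

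For $\vec u$ I would take the symmetric decreasing rearrangement $\vec v^{*}$, i.e. the values of $\vec v$ in non-increasing order placed at the positions $0,1,-1,2,-2,\dots$. This profile is unimodal, so $U^{-}(\vec v^{*})=\set{\ell : (v^{*}_\ell)^2 > s_0}$ is a single interval, hence a single connected component, and $\max_\ell v^{*}_\ell=\max_\ell v_\ell$. In the case treated in the statement, where $\vec v$ has at least two components of $U^{-}$ reaching height $\alpha$, this maximum is $\ge\alpha$, so $\vec v^{*}\in\mathcal{M}_b^\alpha$; together with $\CC(\vec v^{*})=\sigma^2$ and $J(\vec v^{*})\le J(\vec v)\le m$ this gives $\vec v^{*}\in S(m,\sigma)$ as required. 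The inequality $\sum_\ell v^{*}_\ell v^{*}_{\ell-1}\ge \sum_\ell v_\ell v_{\ell-1}$, equivalent to the discrete P\'olya--Szeg\H{o} inequality $\sum_\ell |v^{*}_\ell-v^{*}_{\ell-1}|^2\le \sum_\ell |v_\ell-v_{\ell-1}|^2$, I would establish by writing $\vec v^{*}$ as a limit of two-point polarizations (reflections swapping $v_i$ and $v_j$ whenever the larger value sits on the wrong side of a chosen centre), and checking by a direct computation on the four affected summands that each polarization does not decrease the interaction.

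The main obstacle is making $c(\vec v)$ \emph{strictly positive and explicit}, rather than merely nonnegative. Here I would exploit the valley between two bumps: because $\vec v$ has two distinct components of $U^{-}$ reaching $\alpha$, there is an index $\ell^{*}$ between them with $v_{\ell^{*}}\le\sqrt{s_0}$, while on both sides $\vec v$ climbs back to $\ge\alpha$. Isolating one polarization that moves a value $\ge\alpha$ adjacent to the central block while pushing the valley value outward, I expect a strict gain bounded below by an explicit expression involving $\alpha-\sqrt{s_0}$ and the depth and width of the valley, which furnishes $c(\vec v)$. A secondary, more delicate point is the degenerate part of $\mathcal{M}_{mb}^\alpha$ in which \emph{no} component of $U^{-}$ reaches $\alpha$: since rearrangement preserves the maximum, $\vec v^{*}$ would then fail to lie in $\mathcal{M}_b^\alpha$, and an additional concentration step is needed---narrowing the unimodal profile so as to raise its peak to $\alpha$ at fixed $\CC$, using that $F$ is non-decreasing above $s_0$ by (F3) and that $m<0$ to ensure the potential gain outweighs the extra gradient cost.
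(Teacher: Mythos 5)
Your structural reduction is exactly the right one, and it is the same one the paper exploits: since the $v_\ell$ are nonnegative, any permutation of the values preserves $\CC$ and $\sum_\ell F(v_\ell^2)$, so only the nearest-neighbour interaction $\sum_\ell v_\ell v_{\ell-1}$ can change. The genuine gap is that the whole content of the proposition --- an \emph{explicit, strictly positive} decrease $c(\vec{v})$ --- is precisely the step you leave as an expectation. The P\'olya--Szeg\H{o}/polarization route only yields $J(\vec{v}^*)\le J(\vec{v})$: individual polarizations can have zero gain, the decomposition of $\vec{v}^*$ into polarizations is non-canonical, and ``isolating one polarization \dots I expect a strict gain'' is a hope, not an argument. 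The paper sidesteps all of this with a single local rearrangement whose gain is computed exactly: if $V^-_1=[b_1,c_1]$ and $V^-_2=[b_2,c_2]$ are consecutive components of $V^-$, with $a_i=b_i-1$ in $V^+$, reverse the block $(v_{b_1},\dots,v_{a_2})$. Only the two junction increments change, and since the summands on the same side of $\sqrt{s_0}$ have consistent signs,
\[
|v_{b_1}-v_{a_1}|^2+|v_{b_2}-v_{a_2}|^2-|v_{a_2}-v_{a_1}|^2-|v_{b_2}-v_{b_1}|^2
\ \ge\ 2\,\bigl|\min\{v_{b_1},v_{b_2}\}-\max\{v_{a_1},v_{a_2}\}\bigr|^2 =: c(\vec{v})\, ,
\]
which is strictly positive because $b_i\in V^-$ forces $v_{b_i}^2>s_0$ while $a_i\in V^+$ gives $v_{a_i}^2\le s_0$; iterating merges all components of $V^-$ into one. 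If you want to keep your framework, this block reversal \emph{is} the single two-sided swap you were looking for, and you should carry out exactly this four-term bookkeeping instead of appealing to the limit construction.

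Two further points. First, the degenerate subcase you flag (no component of $U^-$ reaching $\alpha$) is real --- the paper's proof silently excludes it too --- but your proposed concentration fix cannot work: if $\vec{v}$ minimizes $J$ at fixed charge and $\max_\ell v_\ell<\alpha$, no charge-preserving modification strictly decreases $J$ at all, and in general raising the peak costs a gradient term of order $h^{-2}$ with no guaranteed compensation from (F3) at fixed $h$. The proposition must simply be read, consistently with its proof, for genuinely multi-bump $\vec{v}$, i.e.\ at least two components of $U^-$ attaining $\alpha$; your proof should restrict to that case rather than attempt the concentration step. Second, even granting a complete proof along your lines, the global symmetric decreasing rearrangement would be unusable where the proposition is applied: in Theorem \ref{main-corpo} the construction must leave the coding $s^+_{\vec{v}}$ on $U^+$ (essentially) unchanged, whereas $\vec{v}^*$ replaces the radiation part by monotone tails whose coding is trivial, collapsing $K(s^+)$. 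The paper's local surgery, touching only a bounded block between two bumps, is what preserves the $U^+$ structure that the complexity argument later relies on.
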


\begin{proof}
Let $\vec{v} \in {\mathcal{M}}_{mb}^\alpha$, and let $V^-_1$ and $V^-_2$ two consecutive connected components of $V^-$. Let us introduce the notation $b_i,c_i$, $i=1,2$, for the extreme integers of $V^-_i$, so that as intervals
\[
V^-_1 = [b_1,c_1]\quad \text{and} \quad V^-_1 = [b_2,c_2]\, ,
\]
and $a_i=b_i-1$, $d_i=c_i+1$ are the closest integers in $V^+$. By definition of ${\mathcal{M}}_{mb}^\alpha$, there exists an integer between $V^-_1$ and $V^-_2$, so that $d_1\le a_2 < b_2$.

We now define $\vec{u}$ by
\[
u_\ell = \left\{
\begin{array}{ll}
v_\ell\, , & \text{if $\ell \le a_1$}\\[0.2cm]
v_{a_2+b_1-\ell}\, , & \text{if $b_1 \le \ell \le a_2$}\\[0.2cm]
v_\ell\, , & \text{if $\ell \ge b_2$}
\end{array}
\right.
\]
that is, we take the components of $\vec{v}$ with $\ell \in [b_1,a_2]$ and reverse them, so that now $v_{b_1}$ is next to $v_{b_1}$, merging the two components $V^-_1$ and $V^-_2$ into one.

It follows that $U^-$ has exactly one connected component for $\ell \in [a_1,c_2]$, and if $V^-_1$ and $V^-_2$ were the two only connected components of $V^-$, now $\vec{u} \in {\mathcal{M}}_b^\alpha$. Otherwise we repeat the argument for the other components of $V^-$.

From the definitions, we obtain
\begin{equation}\label{ugua-1}
\sum_{\ell\in \Z} F(v_\ell^2) = \sum_{\ell\in \Z} F(u_\ell^2)\quad \text{and} \quad \sum_{\ell\in \Z} v_\ell^2 = \sum_{\ell\in \Z} u_\ell^2
\end{equation}
and
\[
\sum_{\ell\in \Z} |v_\ell - v_{\ell-1}|^2 - \sum_{\ell\in \Z} |u_\ell - u_{\ell-1}|^2 = 
\]
\[
=|v_{b_1}-v_{a_1}|^2 + |v_{b_2}-v_{a_2}|^2 - |u_{b_1}-u_{a_1}|^2 - |u_{b_2}-u_{a_2}|^2 =
\]
\[
= |v_{b_1}-v_{a_1}|^2 + |v_{b_2}-v_{a_2}|^2 - |v_{a_2}-v_{a_1}|^2 - |v_{b_2}-v_{b_1}|^2\, .
\]
If $v_{a_1}< v_{a_2}$ and $v_{b_1}< v_{b_2}$, and recalling that by definition of $V^-$, we have $v_{b_i}^s> s_0 \ge v_{a_j}^2$, then
\[
|v_{b_1}- v_{a_1}|^2 = |(v_{b_1}- v_{a_2}) + (v_{a_2}-v_{a_1})|^2 \ge |v_{b_1}-v_{a_2}|^2 + |v_{a_2}-v_{a_1}|^2
\]
\[
|v_{b_2}- v_{a_2}|^2 = |(v_{b_2}- v_{b_1}) + (v_{b_1}-v_{a_2})|^2 \ge |v_{b_2}-v_{b_1}|^2 + |v_{b_1}-v_{a_2}|^2 
\]
Then
\[
|v_{b_1}-v_{a_1}|^2 + |v_{b_2}-v_{a_2}|^2 - |v_{a_2}-v_{a_1}|^2 - |v_{b_2}-v_{b_1}|^2 \ge 2\, |v_{b_1}-v_{a_2}|^2\, .
\]
From obvious changes to cover the other possibilities, we find that 
\begin{equation} \label{ugua-2}
\begin{aligned}
& |v_{b_1}-v_{a_1}|^2 + |v_{b_2}-v_{a_2}|^2 - |v_{a_2}-v_{a_1}|^2 - |v_{b_2}-v_{b_1}|^2 \ge \\
& \ge 2\, | \min\{v_{b_1},v_{b_2}\} - \max\{v_{a_1},v_{a_2}\} |^2\, .
\end{aligned}
\end{equation}
Putting together \eqref{ugua-1} and \eqref{ugua-2}, we obtain
\begin{equation}\label{disugua-fin}
\CC(\vec{u}) = \CC(\vec{v}) \quad \text{and} \quad J(\vec{u}) \le J(\vec{v}) - 2\, | \min\{v_{b_1},v_{b_2}\} - \max\{v_{a_1},v_{a_2}\} |^2
\end{equation}
The proof is finished with $c(\vec{v}) = 2\, | \min\{v_{b_1},v_{b_2}\} - \max\{v_{a_1},v_{a_2}\} |^2$.
\qed
\end{proof}

The final step is to analyze the behavior of the energy $J$ for functions with non-regular or regular bumps. In particular we show that for all $\beta>0$, given $\vec{v}\in {\mathcal{M}}_{b,s}^\alpha$ there exists one in ${\mathcal{M}}_{b,r}^{\alpha,\beta}$ with less energy. Again we look for a quantitative estimate on the decrease of energy.

\begin{proposition}\label{prop-stor}
Let $\sigma$ and $h$ fixed, and let $\alpha> s_0$ and $\beta >0$. Then for each $\vec{v} \in {\mathcal{M}}_{b,s}^\alpha$  there exists $\vec{u} \in {\mathcal{M}}_{b,r}^{\alpha,\beta}$ such that $J(\vec{u}) \le J(\vec{v})-\frac{s_0\, \beta^2}{\sigma^2}$ and $\CC(\vec{u})= \CC(\vec{v})$.
\end{proposition}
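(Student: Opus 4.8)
The plan is to produce $\vec u$ from $\vec v$ by a rearrangement of the components, namely the unimodal (``bell--shaped'') rearrangement obtained by sorting the values of $\vec v$ on each side of the position $\ell_0$ of its maximum, increasingly on the left and decreasingly on the right. Any such rearrangement only permutes the components, so both $\CC(\vec u)=\sum_\ell u_\ell^2$ and $\sum_\ell F(u_\ell^2)$ are left unchanged; in particular $\CC(\vec u)=\CC(\vec v)$ and, by \eqref{def-energy-int}, the whole variation of $J$ reduces to the Dirichlet term,
\[
J(\vec u)-J(\vec v)=\frac{1}{h^{2}}\Big(\sum_{\ell}|u_\ell-u_{\ell-1}|^2-\sum_\ell|v_\ell-v_{\ell-1}|^2\Big).
\]
Moreover $\vec u$ is non-decreasing up to $\ell_0$ and non-increasing afterwards, so its superlevel set $\set{\ell:u_\ell^2>s_0}$ is a single interval and $\vec u$ is regular with precision $0$, hence a fortiori with precision $\beta$; since the maximum is preserved the bump still has height $\ge\alpha$, and therefore $\vec u\in{\mathcal M}_{b,r}^{\alpha,\beta}$. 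It then remains to show that the Dirichlet term decreases by at least $\frac{s_0\beta^2}{\sigma^2}$.

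The heart of the argument is a quantitative version of the discrete P\'olya--Szeg\H{o} inequality, keeping an explicit remainder. On the portion of $\vec v$ to the left of $\ell_0$ I would split every increment into its positive and negative part, so that $\sum|v_\ell-v_{\ell-1}|^2=\sum_j a_j^2+\sum_j d_j^2$, where the $a_j$ are the up-steps and the $d_j$ the down-steps; the sorted (non-decreasing) configuration has no down-steps, and its Dirichlet energy equals $\sum_i g_i^2$, where $g_1,\dots,g_r$ are the gaps between consecutive order statistics of the values in this portion. The key observation is a co-area one: each up-step $a_j$ is the sum of the consecutive gaps $g_i$ it crosses, so $a_j^2\ge\sum_{i\in S_j}g_i^2$; and since this portion attains its minimum and ends at its maximum, every gap is crossed upward at least once, whence $\sum_j a_j^2\ge\sum_i g_i^2$. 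Consequently sorting the left portion lowers its Dirichlet energy by at least $\sum_j d_j^2$, and the same holds on the right.

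Finally I would convert the non-regularity of $\vec v$ into a lower bound for $\sum_j d_j^2$. Taking $\ell_0$ as the position of the maximum in $U^-_\alpha$ and using that $\vec v\in{\mathcal M}_{b,s}^\alpha$, the regularity condition fails for this $\ell_0$, so, after a reflection if necessary, there are indices $\ell^{*}<\ell^{**}\le\ell_0$ in $U^-_\alpha$ with $v_{\ell^{*}}-v_{\ell^{**}}>\beta$. Hence the total descent between them exceeds $\beta$, i.e.\ $\sum_j d_j>\beta$ over the down-steps in $[\ell^{*},\ell^{**}]$, and this descent is spread over at most $\#(U^-_\alpha)\le\#(U^-)\le\sigma^2/s_0$ steps by \eqref{card-meno}. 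Cauchy--Schwarz then gives
\[
\sum_j d_j^2\ \ge\ \frac{\big(\sum_j d_j\big)^2}{\#(U^-_\alpha)}\ >\ \frac{\beta^2}{\#(U^-_\alpha)}\ \ge\ \frac{s_0\,\beta^2}{\sigma^2},
\]
where the sums run over the down-steps in $[\ell^{*},\ell^{**}]$. Combined with the previous paragraph, the Dirichlet term drops by at least $\frac{s_0\beta^2}{\sigma^2}$; since $h^{-2}\ge1$ in the relevant regime $h\le1$, this yields $J(\vec u)\le J(\vec v)-\frac{s_0\beta^2}{\sigma^2}$ together with $\CC(\vec u)=\CC(\vec v)$.

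The step I expect to be the main obstacle is the quantitative rearrangement inequality of the second paragraph: the plain discrete P\'olya--Szeg\H{o} inequality only yields a non-strict decrease, whereas here one must retain an explicit remainder equal to the down-step energy. The co-area/level-crossing bookkeeping is what makes this remainder visible; the delicate point is that comparing $\vec v$ with the fully sorted configuration also produces the up-step energy, which has to be absorbed by $\sum_i g_i^2$ rather than simply discarded. Everything else---the invariance of $\CC$ and of $\sum_\ell F(u_\ell^2)$, the membership $\vec u\in{\mathcal M}_{b,r}^{\alpha,\beta}$, and the final Cauchy--Schwarz estimate via \eqref{card-meno}---is routine.
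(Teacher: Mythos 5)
Your proof is correct, but it takes a genuinely different route from the paper's. The paper does not rearrange globally: it extracts from the failure of regularity a single $\beta$-dip, namely $b<c<d$ in $V^-_\alpha$ with $v_b\ge v_c+\beta$ and $v_d\ge v_c+\beta$, sorts the values increasingly only on the finite window $[a,d]$ (where $a$ is the last index before $b$ with $v_a<v_c$), and bounds the drop of the Dirichlet sum by a direct four-point inequality, generalized to $\ge \beta^2/n \ge s_0\beta^2/\sigma^2$ with $n=\#[a,d]$ controlled via \eqref{card-meno} --- the same counting you use in your final Cauchy--Schwarz step. You replace this local surgery by the full unimodal rearrangement about the maximum and compensate with a quantitative discrete P\'olya--Szeg\H{o} inequality whose remainder is the down-step energy $\sum_j d_j^2$; your level-crossing argument for $\sum_j a_j^2\ge\sum_i g_i^2$ is sound (each gap between consecutive order statistics is crossed upward at least once, since the values tend to $0$ away from $\ell_0$ and the portion ends at the maximum), and it extends without trouble to the infinite tails because $\vec v$ is a null sequence in $l^2$. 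What your route buys: membership $\vec u\in{\mathcal{M}}_{b,r}^{\alpha,\beta}$ is immediate, since the output is exactly unimodal (precision $0$), whereas the paper's single local sort removes only one dip and, strictly speaking, must be iterated to land in ${\mathcal{M}}_{b,r}^{\alpha,\beta}$. What the paper's route buys: the modification is confined to the bump region, and this matters downstream --- the proof of Theorem \ref{main-corpo} explicitly uses that Propositions \ref{prop-mtob} and \ref{prop-stor} leave the coding $s^+_{\vec{v}}$ on $V^+$ unchanged, while your global rearrangement also sorts the small radiative components and destroys $s^+$. So your argument proves the proposition exactly as stated, but to plug into the main theorem you should restrict the unimodal rearrangement to the component $U^-_\alpha$ (plus adjacent boundary indices), where the same crossing argument applies verbatim. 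Finally, both proofs discard the prefactor $h^{-2}$ on the Dirichlet term; you flag the needed hypothesis $h\le 1$ explicitly, while the paper does so silently.
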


\begin{proof}
Let $\vec{v} \in {\mathcal{M}}_{b,s}^\alpha$, then there exist $b,c,d \in V^-_\alpha$ with $b<c<d$ such that
\[
v_b \ge v_c + \beta\quad \text{and} \quad v_d \ge v_c + \beta\, .
\]
Without loss of generality we let $v_b \le v_d$ and denote by $a$ the greatest integer smaller than $b$ such that $v_a< v_c$. 

We now define $\vec{u}$ as follows: $u_\ell = v_\ell$ if $\ell \not\in [a,d]$, and for $\ell\in [a,d]$ we rearrange the $v_\ell$ in increasing order. Then again
\begin{equation}\label{ugua-3}
\sum_{\ell\in \Z} F(v_\ell^2) = \sum_{\ell\in \Z} F(u_\ell^2)\quad \text{and} \quad \sum_{\ell\in \Z} v_\ell^2 = \sum_{\ell\in \Z} u_\ell^2
\end{equation}
and 
\[
\sum_{\ell\in \Z} |v_\ell - v_{\ell-1}|^2 - \sum_{\ell\in \Z} |u_\ell - u_{\ell-1}|^2 = 
 \sum_{\ell=a+1}^d |v_\ell - v_{\ell-1}|^2 - \sum_{\ell=a+1}^d |u_\ell - u_{\ell-1}|^2 \, .
\]
Let us first consider the case $[a,d]=\{a,b,c,d\}$, that is there are not intermediate integers between $a,b,c,d$. Then
\[
\sum_{\ell=a+1}^d |v_\ell - v_{\ell-1}|^2 = |v_b-v_a|^2 + |v_c-v_b|^2 + |v_d-v_c|^2\, ,
\]
and rearranging from the smallest to the biggest
\[
\sum_{\ell=a+1}^d |u_\ell - u_{\ell-1}|^2 = |v_c-v_a|^2 + |v_b-v_c|^2 + |v_d-v_b|^2\, .
\]
Using the inequalities
\[
|v_b-v_a|^2 = |(v_b-v_c)+(v_c-v_a)|^2 \ge |v_b-v_c|^2 +  |v_c-v_a|^2
\]
\[
|v_d-v_c|^2 = |(v_d-v_b)+(v_b-v_c)|^2 \ge |v_d-v_b|^2 +  |v_b-v_c|^2
\]
we obtain
\[
\sum_{\ell=a+1}^d |v_\ell - v_{\ell-1}|^2 - \sum_{\ell=a+1}^d |u_\ell - u_{\ell-1}|^2 \ge 2\, |v_b-v_c|^2 \ge 2\beta^2\, .
\]
This inequality can be easily generalized to the case $\#[a,d] = n$, that is $d=a+n-1$, obtaining
\begin{equation}\label{disugua-fin-2}
\sum_{\ell=a+1}^d |v_\ell - v_{\ell-1}|^2 - \sum_{\ell=a+1}^d |u_\ell - u_{\ell-1}|^2 \ge \frac{\beta^2}{n} \ge \frac{s_0\, \beta^2}{\sigma^2}
\end{equation}
where we have used $n\le \#(V^-)$ and \eqref{card-meno}. The proof is finished putting together \eqref{ugua-3} and \eqref{disugua-fin-2}.
\qed
\end{proof}

We are now ready to state the main result. We need to consider how the energy $J^-$ behaves when changing the charge of a function. In particular we define the function $f:(0,1) \to \R^+$ as the function which satisfies the following condition. For each $\vec{u} \in S(m,\sigma)$, let $\vec{u}^\gamma$ be defined by
\begin{equation}\label{dilation}
u_\ell^\gamma = \left\{
\begin{array}{ll}
u_\ell\, , & \text{if $\ell \in U^+$}\\[0.2cm]
\gamma u_\ell\, , & \text{if $\ell \in U^-$}
\end{array}
\right.
\end{equation}
for $\gamma\in (0,1)$. Let $q(\gamma,\vec{u}) := \CC(\vec{u}) - \CC(\vec{u}^\gamma)$ for which
\[
(1-\gamma^2) s_0^2 \le q(\gamma,\vec{u}) \le (1-\gamma^2) \sigma^2\, .
\]
Then $J^+(\vec{u}^\gamma) = J^+(\vec{u})$ and 
\begin{equation}\label{massima-var}
J^-(\vec{u}^\gamma) \le J^-(\vec{u}) + f(\gamma)\, .
\end{equation}
The existence of such a function $f$ follows from the continuity of $J$ and compactness arguments.

\begin{theorem}\label{main-corpo}
Let $\sigma$ and $h$ fixed, and let $\alpha >s_0$ and $\beta\in (0, \alpha-s_0)$. Let $\gamma\in (0,1)$ such that
\[
f(\gamma) \le \min \set{ \frac 12\, \frac{s_0\, \beta^2}{\sigma^2}\, , \, \alpha-\beta-s_0}\, .
\]
Finally we choose $\mu$ such that
\[
\mu \le \min \set{ \frac{(1-\gamma^2)s_0}{h^2}\, , \, \frac 12\, \frac{s_0\, \beta^2}{\sigma^2}\, , \, \alpha-\beta-s_0}\, ,
\]
and $N$ such that there exists integers $n,a\ge 1$ satisfying \eqref{cond-an}.

With these choices, using notation \eqref{def-i-pm}, we have
\[
\max_{{\mathcal{M}}_{b,r}^{\alpha,\beta}} {\mathcal{I}}_{N}^+ >  \max_{{\mathcal{M}}_{mb}^\alpha} {\mathcal{I}}_{N}^+\, ,
\]
and 
\[
\max_{{\mathcal{M}}_{b,r}^{\alpha,\beta}} {\mathcal{I}}_{N}^+ > \max_{{\mathcal{M}}_{b,s}^\alpha} {\mathcal{I}}_{N}^+\, .
\]
\end{theorem}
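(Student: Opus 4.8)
The plan is to prove both strict inequalities by the same mechanism: starting from a maximizer of $\mathcal{I}_N^+$ over the inferior macrostate, I would manufacture an element of $\mathcal{M}_{b,r}^{\alpha,\beta}$ whose radiation string is strictly more complex. First I note that both maxima are attained. On $S(m,\sigma)$ one has $\#(U^-)\le \sigma^2/s_0$ by \eqref{card-meno}, so the bump carries only $O(1)$ active symbols, while an active symbol of $s^+_{\vec u}$ (one with $\omega_\ell\ge 2$, hence $|u_\ell-u_{\ell-1}|\ge \sigma/N$) costs at least $\frac{1}{h^2}(\sigma/N)^2$ in the gradient part of $J^+$, a quantity bounded by the fixed ceiling $m$. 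Hence $\mathcal{I}_N^+$ is a bounded, integer-valued functional and its supremum over each macrostate is achieved. In particular ``strictly greater'' only means ``greater by at least one'', which is exactly what one unit of the complexity pump below will supply.

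The engine of the argument is a single pump cycle combining the dilation \eqref{dilation} with Proposition \ref{monotonia-k}. Given a single regular bump $\vec w\in \mathcal{M}_{b,r}^{\alpha,\beta}$ sitting strictly below the energy ceiling, I would first dilate its $U^-$ part by $\gamma$: by \eqref{massima-var} this leaves $J^+$ unchanged, raises $J^-$ by at most $f(\gamma)$, and frees a charge $q(\gamma,\vec w)\ge (1-\gamma^2)s_0$. Because $h^2\mu\le (1-\gamma^2)s_0$ by the choice of $\mu$, this freed charge exceeds the charge $h^2\mu$ that Proposition \ref{monotonia-k} consumes, so I can then append, far to the right, the zig-zag radiation produced there; its values stay below $s_1<s_0$, so it creates no new component of $U^-$, and by (K3) it raises $K(s^+)$ by at least one while costing only $\mu$ in $J^+$ and leaving room to tune the charge back to exactly $\sigma^2$. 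The dilation preserves $\beta$-regularity, since scaling a profile that is monotone-up then monotone-down to precision $\beta$ keeps that structure; and the two bounds $f(\gamma),\mu\le \alpha-\beta-s_0$ are precisely what keep the peak at height $\ge\alpha$, so the output still lies in $\mathcal{M}_{b,r}^{\alpha,\beta}$. The net effect of one cycle is a cost of $f(\gamma)+\mu$ in energy against a gain of $+1$ in $\mathcal{I}_N^+$, with charge restored.

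With this cycle in hand the two inequalities follow by first moving into $\mathcal{M}_{b,r}^{\alpha,\beta}$ at a guaranteed energy discount. For the maximum over $\mathcal{M}_{b,s}^\alpha$, I would take a maximizer $\vec v$ and apply Proposition \ref{prop-stor}, which lands in $\mathcal{M}_{b,r}^{\alpha,\beta}$ at the same charge and with energy lowered by $s_0\beta^2/\sigma^2$; since $f(\gamma)+\mu\le s_0\beta^2/\sigma^2$ this discount pays for exactly one pump cycle, yielding an element of $\mathcal{M}_{b,r}^{\alpha,\beta}$ with strictly larger $\mathcal{I}_N^+$. For $\mathcal{M}_{mb}^\alpha$ I would first apply Proposition \ref{prop-mtob} to coalesce the components of $U^-$ into a single bump, strictly lowering the energy, and then Proposition \ref{prop-stor} to regularize the resulting (generically multi-humped, hence non-regular) bump; the regularization again furnishes the discount $s_0\beta^2/\sigma^2$ needed to fund one cycle. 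In both cases the maximum over $\mathcal{M}_{b,r}^{\alpha,\beta}$ is at least the maximum over the inferior macrostate plus one.

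The step I expect to be the real obstacle is the complexity bookkeeping across Propositions \ref{prop-mtob} and \ref{prop-stor}. These rearrange the function inside a neighborhood of $U^-$ and the short gaps between components, which can alter the symbols $s_i$ there, and I must argue this does not erode the large part of $K(s^+)$, which lives in the far radiation left untouched by the rearrangements, so that the single unit gained from the pump cycle is a genuine net increase. Here I would lean on the estimates opening Section \ref{sec:macro}: using $\#(U^-)\le\sigma^2/s_0$ they bound the gap between $K(s_{\vec u})$ and $K(s^+_{\vec u})$ logarithmically in $k$, so the bump and the coalesced gaps contribute only $O(\log k)$ to the complexity, and it suffices to place the pumped radiation far enough out that the rearranged region is disjoint from it. The secondary difficulty, checking that one dilation keeps the peak above $\alpha$ and preserves $\beta$-regularity, is exactly what the constraints $f(\gamma),\mu\le\alpha-\beta-s_0$ are designed to control, and verifying it reduces to a direct estimate on the dilated profile.
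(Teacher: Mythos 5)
Your proposal follows the paper's strategy in outline — take a maximizer of $\mathcal{I}_N^+$ on the inferior macrostate, descend in energy via Propositions \ref{prop-mtob} and \ref{prop-stor}, free charge with the contraction \eqref{dilation}, spend the energy discount on Proposition \ref{monotonia-k} to gain one unit of complexity, and refill the charge (the paper does this with the vanishing functions \eqref{frittatine}); your observation that the maxima are attained is a sensible addition the paper leaves implicit. But there are two genuine gaps. First, in the multibump case you route every function through Proposition \ref{prop-stor} on the grounds that the merged bump is ``generically'' non-regular. Genericity is not available here: the bump produced by Proposition \ref{prop-mtob} may already lie in $\mathcal{M}_{b,r}^{\alpha,\beta}$ (this happens when the junction values $v_{b_1},v_{b_2}$ exceed $\alpha-\beta$), in which case Proposition \ref{prop-stor} is inapplicable and your guaranteed discount $s_0\beta^2/\sigma^2$ disappears, while the merging discount $c(\vec{v})$ has no a priori lower bound in general. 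The paper's proof contains a case analysis precisely for this: if the merged function is already regular, then two maxima at height $\ge\alpha$ with an intermediate point below $\alpha-\beta$ would contradict regularity, forcing $c(\vec{v})\ge 2\,|\alpha-\beta-s_0|^2$. This is the actual role of the quantity $\alpha-\beta-s_0$ in the hypotheses on $f(\gamma)$ and $\mu$ — you instead misattribute it to keeping the peak above $\alpha$ under dilation, which it does not control.

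Second, your complexity bookkeeping would fail as proposed. You correctly identify the obstacle — the surgeries can alter symbols of $s^+$ near the bump — but your fix, bounding the erosion by $O(\log k)$ via the estimates opening Section \ref{sec:macro}, does not close the argument: a possible loss of $O(\log k)$ against a pump gain of exactly $+1$ yields $K(s^+_{\vec{u}})\ge \max \mathcal{I}_N^+ - O(\log k) + 1$, which is not a strict increase, and the budget $f(\gamma)+\mu\le s_0\beta^2/\sigma^2$ together with the single admissible dilation funds only one cycle, so you cannot iterate to absorb the loss. The paper avoids this entirely by asserting exact invariance: the rearrangements in Propositions \ref{prop-mtob} and \ref{prop-stor} and the contraction \eqref{dilation} do not change the coding on $V^+$, so $s^+_{\vec{v}} = s^+_{\vec{w}^\gamma}$ identically, and then (K3) through Proposition \ref{monotonia-k} gives $K(s^+_{\vec{u}})\ge K(s^+_{\vec{v}})+1$ as a genuine net gain. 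Without that exact-preservation statement (or a proof that the erosion is zero rather than logarithmic), your version of the argument does not establish either strict inequality.
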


\begin{proof}
Let $\vec{v} \in {\mathcal{M}}_{mb}^\alpha$ such that 
\[
\max_{{\mathcal{M}}_{mb}^\alpha} {\mathcal{I}}_{N}^+ = K(s^+_{\vec{v}})\, .
\]
By Proposition \ref{prop-mtob} we find $\vec{\tilde w} \in {\mathcal{M}}_{b}^\alpha$ such that
\[
J(\vec{\tilde w}) \le J(\vec{v}) - c(\vec{v})< J(\vec{v})\quad \text{and} \quad \CC(\vec{\tilde w}) = \CC(\vec{v})
\]
where $c(\vec{v})$ is given in \eqref{disugua-fin}.

If $\vec{\tilde w} \in {\mathcal{M}}_{b,r}^{\alpha,\beta}$, by the construction in Proposition \ref{prop-mtob}, we must have that the values of $v_\ell$ for values $\ell$ of boundary for the connected components of $V^-$, are larger than $\alpha-\beta$. Otherwise in $\vec{\tilde w}$ we find two points of maximum at value $\alpha$, and between a point below $\alpha-\beta$. This is in contradiction with $\vec{\tilde w} \in {\mathcal{M}}_{b,r}^{\alpha,\beta}$. Hence in this case we have 
\[
c(\vec{v}) \ge 2\, |\alpha-\beta-s_0|^2\, .
\]

If $\vec{\tilde w} \in {\mathcal{M}}_{b,s}^{\alpha}$, we apply Proposition \ref{prop-stor} and find $\vec{w}\in {\mathcal{M}}_{b,r}^{\alpha,\beta}$ such that
\[
\CC(\vec{w}) = \CC(\vec{\tilde w}) = \CC(\vec{v})\, ,
\]
and
\[
J(\vec{w}) \le J(\vec{\tilde w}) - \frac{s_0\, \beta^2}{\sigma^2} < J(\vec{v}) - \frac{s_0\, \beta^2}{\sigma^2}\, .
\]

In both cases we have found $\vec{w}\in {\mathcal{M}}_{b,r}^{\alpha,\beta}$ such that
\[
\CC(\vec{w}) = \CC(\vec{v})\, ,
\]
and
\[
J(\vec{w}) < J(\vec{v}) - \min\set{ \frac{s_0\, \beta^2}{\sigma^2}\, ,\, 2\, |\alpha-\beta-s_0|^2}\, .
\]

The argument is concluded by application of Proposition \ref{monotonia-k}, but in order to apply the proposition we have first to decrease the charge of the function. So, using \eqref{dilation}, we consider $\vec{w}^\gamma \in {\mathcal{M}}_{b,s}^{\alpha}$, for which
\[
\CC(\vec{w}^\gamma) = \CC(\vec{w}) - q(\gamma,\vec{w}) \le  \CC(\vec{w}) - (1-\gamma^2) s_0^2 \, ,
\]
and by \eqref{massima-var}
\[
J(\vec{w}^\gamma) \le J(\vec{w}) + f(\gamma) \, .
\]
Applying Proposition \ref{monotonia-k} to $\vec{w}^\gamma$, we find $\vec{u}\in {\mathcal{M}}_{b,r}^{\alpha}$ such that
\[
\CC(\vec{u}) \le \CC(\vec{w}^\gamma) + h^2 \mu \le \CC(\vec{v}) - (1-\gamma^2) s_0^2+ h^2 \mu \le \CC(\vec{v})\, ,
\]
and
\[
J(\vec{u}) \le J(\vec{w}) + \mu < J(\vec{v}) - \min\set{ \frac{s_0\, \beta^2}{\sigma^2}\, ,\, 2\, |\alpha-\beta-s_0|^2} + f(\gamma) + \mu \le J(\vec{v})\, .
\]
Finally, notice that in Propositions \ref{prop-mtob} and \ref{prop-stor} we haven't changed the coding of the functions $v$ for what concerns the sets $V^+$. The same is true for the contraction \eqref{dilation}. Hence 
\[
s^+_{\vec{v}} = s^+_{\vec{\vec{w}^\gamma}}\, .
\]
In particular, by Proposition \ref{monotonia-k}
\[
K(s^+_{\vec{u}}) \ge K(s^+_{\vec{v}}) +1 = \max_{{\mathcal{M}}_{mb}^\alpha} {\mathcal{I}}_{N}^+ +1\, .
\]
Finally, if $\CC(\vec{u}) < \CC(\vec{v})$, then we can add a vanishing function defined in \eqref{frittatine} far away from the region $U^-$. In this way we can make $\CC(\vec{u}) = \CC(\vec{v})$ and still $J(\vec{u})\le J(\vec{v})$ by the computations of Section \ref{sec:hylo}. This last step does not affect the complexity if the vanishing functions is small enough, otherwise it could further increase th complexity of $s_{\vec{u}}^+$.
 
Since $\vec{u}\in {\mathcal{M}}_{b,r}^{\alpha}$ this concludes the proof of the first inequality. The second inequality is proved in the same way by starting with $v\in {\mathcal{M}}_{b,s}^{\alpha}$.
\qed
\end{proof}

\section{Conclusion} \label{sec:concl}

We have first divided the manifold of functions $\psi(t,\cdot) \in l^2(\Z)$ with fixed charge into regions, which we called macrostates, described by the behavior of the profiles of the modulus $u=|\psi(t,\cdot)|$ above the level $s_0$. Then we have shown that it is possible to go from one macrostate to the other increasing the complexity of the function $u$. In particular we have shown that the complexity of $u$ is dominated by the complexity of the part of $u$ below $s_0$, and that it is this part of the complexity which increases. 

In the proof of the main result, we have shown how to go from the macrostate of functions with many bumps to the macrostate of functions with only one bump, and then to the macrostate of functions with only one regular bump. Looking at the details, we can also conclude that these transformations decrease the complexity of the part of the functions above $u$. This is not relevant for our main result since this part of the functions is uniformly bounded. However we can interpret this by saying that there is a flow of complexity from the structured part of the function, the soliton part which lies above $s_0$, towards the radiative part, where the function is below $s_0$. This flow of complexity leads to a global increase of complexity and to the formation of one regular structured part. This interpretation is compatible with an analogy between the complexity and the entropy of the system, as regulated by the second law of thermodynamics.

\end{document}